\documentclass[conference]{IEEEtran}
\usepackage{amsfonts}
\usepackage{times}
\usepackage{graphicx}
\usepackage{latexsym}
\usepackage{dsfont}
\usepackage{amssymb}
\usepackage{amsmath}
\usepackage{cite}
\usepackage{verbatim}
\usepackage{subfigure}
\newtheorem{theorem}{Theorem}

\newtheorem{algorithm}[theorem]{Algorithm}
\newtheorem{assumption}[theorem]{Assumption}

\newtheorem{lemma}[theorem]{Lemma}

\newtheorem{proposition}[theorem]{Proposition}

\newenvironment{proof}{ \textbf{Proof:} }{ \hfill $\Box$}

\newcommand{\figref}[1]{{Fig.}~\ref{#1}}


\def\bb0{{\mathbb{0}}}


\def\ba{{\mathbf{a}}}
\def\bb{{\mathbf{b}}}

\def\bff{{\mathbf{f}}}
\def\bg{{\mathbf{g}}}
\def\bh{{\mathbf{h}}}

\def\bn{{\mathbf{n}}}

\def\br{{\mathbf{r}}}
\def\bs{{\mathbf{s}}}

\def\bv{{\mathbf{v}}}
\def\bw{{\mathbf{w}}}
\def\bx{{\mathbf{x}}}

\def\bz{{\mathbf{z}}}
\def\b0{{\mathbf{0}}}

\def\bA{{\mathbf{A}}}

\def\bD{{\mathbf{D}}}

\def\bF{{\mathbf{F}}}

\def\bH{{\mathbf{H}}}
\def\bI{{\mathbf{I}}}

\def\bP{{\mathbf{P}}}


\def\bbE{{\mathbb{E}}}

\def\cA{\mathcal{A}}

\def\cC{\mathcal{C}}

\def\cF{\mathcal{F}}

\def\cN{\mathcal{N}}

\def\cW{\mathcal{W}}



\def\sf0{{\mathsf{0}}}










\newcommand{\sref}[1]{{Section}~\ref{#1}}
\usepackage{epstopdf}
\usepackage{enumerate}
\usepackage{algorithmicx}
\usepackage{algorithm}
\usepackage{amsmath}
\usepackage[noend]{algpseudocode}
\usepackage{float}
\usepackage{color}
\usepackage{makeidx}
\usepackage{bbm}

\allowdisplaybreaks
\IEEEoverridecommandlockouts
\def\j{\mathrm{j}}
\begin{document}
\title{ Achievable Rates of Multi-User Millimeter Wave Systems with Hybrid Precoding}
\author{\IEEEauthorblockN{{Ahmed Alkhateeb, Robert W. Heath Jr.}
\IEEEauthorblockA{Wireless Networking and Communications Group\\
The University of Texas at Austin\\
Email: $\{$aalkhateeb, rheath$\}$@utexas.edu}
\and
\IEEEauthorblockN{Geert Leus}
\IEEEauthorblockA{Faculty of EE, Mathematics and Computer Science\\
Delft University of Technology\\
g.j.t.leus@tudelft.nl}}
}

\maketitle
\thispagestyle{empty}
\pagestyle{empty}

\begin{abstract}
Millimeter wave (mmWave) systems will likely employ large antenna arrays at both the transmitters and receivers. A natural application of antenna arrays is simultaneous transmission to multiple users, which requires multi-user precoding at the transmitter. Hardware constraints, however, make it difficult to apply conventional lower frequency MIMO precoding techniques at mmWave. This paper proposes and analyzes a low complexity hybrid analog/digital precoding algorithm for downlink multi-user mmWave systems. Hybrid precoding involves a combination of analog and digital processing that is motivated by the requirement to reduce the power consumption of the complete radio frequency and mixed signal hardware. The proposed algorithm configures hybrid precoders at the transmitter and analog combiners at multiple receivers with a small training and feedback overhead. For this algorithm, we derive a lower bound on the achievable rate for the case of single-path channels, show its asymptotic optimality at large numbers of antennas, and make useful insights for more general cases. Simulation results show that the proposed algorithm offers higher sum rates compared with analog-only beamforming, and approaches the performance of the unconstrained digital precoding solutions.
\end{abstract}

\section{Introduction} \label{sec:intro}

The large bandwidths in the mmWave spectrum make mmWave communication desirable for wireless local area networking and a candidate for future cellular systems \cite{Pi2011,Cov_Magazine,Rapp5G,ayach2013spatially,mmWave_Estimation_2013}. Achieving high quality communication links in mmWave systems requires employing large antenna arrays at both the access point or base station (BS) and the mobile stations (MS's) \cite{ayach2013spatially,Pi2011}. The transmit antenna array can be used to send data to multiple users using multi-user MIMO precoding principles. In conventional lower frequency systems, this precoding is usually performed in the digital baseband to have a better control over the entries of the precoding matrix. Unfortunately, the high cost and power consumption of mixed signal components make fully digital baseband precoding difficult at mmWave systems \cite{ayach2013spatially}. Further, constructing precoding matrices is usually based on complete channel state information, which is hard to acquire in mmWave systems \cite{mmWave_Estimation_2013}. Therefore, mmWave-specific multi-user MIMO precoding algorithms are required.

In single-user mmWave systems, analog beamforming, which controls the phase of the signal transmitted at each antenna via a network of analog phase shifters and is implemented in the radio frequency (RF) domain, is the defacto approach for implementing beamforming \cite{Wang1,brady2013beamspace}. The phase shifters might be digitally controlled with only quantized phase values. In \cite{Wang1}, adaptive beamforming algorithms and multi-resolution codebooks were developed by which the transmitter and receiver jointly  design their analog beamforming vectors. In \cite{brady2013beamspace}, beamspace MIMO was introduced for large antenna systems in which discrete Fourier transform beamforming vectors are used to asymptotically maximize the received signal power. The RF hardware constraints such as the availability of only quantized and constant modulus phase shifters, however,  limit the ability to make sophisticated processing using analog-only beamforming, for example to manage interference between users. To multiplex several data streams and perform more accurate beamforming, hybrid precoding was proposed \cite{ayach2013spatially,mmWave_Estimation_2013}, where the processing is divided between the analog and digital domains. In \cite{ayach2013spatially},\cite{mmWave_Estimation_2013}, the sparse nature of the mmWave channels was exploited to develop low-complexity hybrid precoding algorithms for single-user channels that support a limited number of streams \cite{Rapp5G}. The digital precoding layer of hybrid precoding gives more freedom in designing the precoders, compared to the analog-only solution, and seems like a promising framework on which to build multi-user MIMO mmWave precoders.

In this paper, we propose a low-complexity yet efficient hybrid analog/digital precoding algorithm for downlink multi-user mmWave systems. The proposed algorithm depends on the known (but arbitrary) array geometry and incurs a low training and feedback overhead. Our model assumes that the MS's employ analog-only combining while the BS performs hybrid analog/digital precoding. We provide a lower bound on the achievable rate using the proposed algorithm for the case of single-path channels, which is relevant for mmWave systems, and provides useful insights for more general settings. The proposed algorithm and performance bounds are also evaluated by simulations and compared with analog-only beamforming solutions. The results indicate that the proposed algorithm performs very well in mmWave systems thanks to the sparse nature of the channel and the large number of antennas used by the BS and MS's.
An extended journal version of this work was submitted recently\cite{alkhateeb2014limited}, which provides an analysis of the proposed algorithm in more general channel settings, and characterizes the rate-loss when a limited feedback exists between the BS and MS's.

We use the following notation: $\bA$ is a matrix, $\ba$ is a vector, $a$ is a scalar, and $\cA$ is a set. $\|\bA \|_F$ is the Frobenius norm of $\bA$, whereas $\bA^\mathrm{T}$, $\bA^*$, $\bA^{-1}$, are its transpose, Hermitian, and inverse, respectively. $\bbE\left[\cdot\right]$ denotes expectation.
\section{System Model} \label{sec:Model}
\begin{figure} [t]
\centerline{
\includegraphics[width=1\columnwidth]{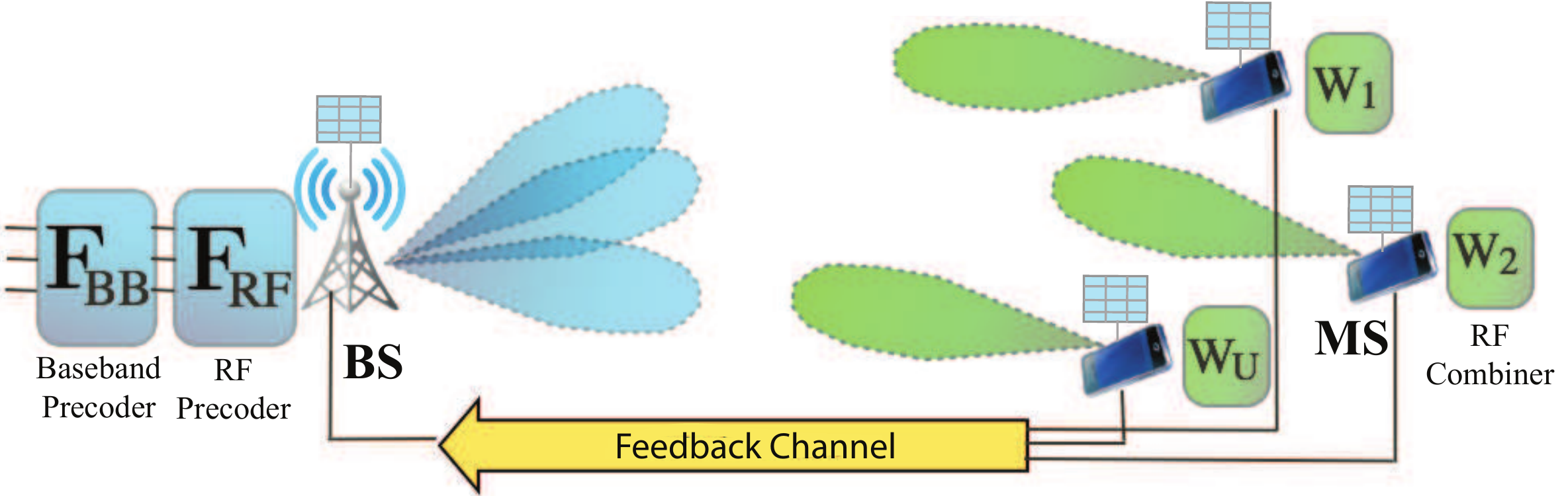}
}
\caption{A multi-user mmWave downlink system model, in which a BS uses hybrid analog/digital precoding and a large antenna array to serve $U$ MSs. Each MS employs analog-only combining to receive the signal. }
\label{fig:LimitedFB}
\end{figure}

Consider the  mmWave system shown in \figref{fig:LimitedFB}. A BS  with $N_\mathrm{BS}$ antennas and $N_\mathrm{RF}$ RF chains is assumed to communicate with $U$ MS's. Each MS is equipped with $N_\mathrm{MS}$ antennas as depicted in \figref{fig:Hybrid}. We focus on the multi-user beamforming case in which the BS communicates with every MS via \textit{only one stream}. Therefore, the total number of streams is $N_\mathrm{S}=U$. Further, we assume that the maximum number of users that can be simultaneously served by the BS equals the number of BS RF chains, i.e., $U \leq N_\mathrm{RF}$. This is motivated by the spatial multiplexing gain of the described multi-user hybrid precoding system, which is limited by $\min\left(N_\mathrm{RF},U\right)$ for $N_\mathrm{BS} > N_\mathrm{RF}$. For simplicity, we will also assume that the BS will use $U$ out of the $N_\mathrm{RF}$ available RF chains to serve the $U$ users.

On the downlink, the BS applies a $U \times U$ baseband precoder $\bF_\mathrm{BB}=\left[\bff_1^\mathrm{BB}, \bff_2^\mathrm{BB}, ..., \bff_U^\mathrm{BB}\right]$ followed by an $N_\mathrm{BS} \times U$ RF precoder, $\bF_\mathrm{RF}=\left[\bff_1^\mathrm{RF}, \bff_2^\mathrm{RF}, ..., \bff_U^\mathrm{RF}\right]$. The sampled transmitted signal is therefore
\begin{equation}
\bx=\bF_\mathrm{RF} \bF_\mathrm{BB} \bs,
\end{equation}
where $\bs=[s_1, s_2, ..., s_U]^\mathrm{T}$ is the $U \times 1$ vector of transmitted symbols, such that $\bbE\left[\bs\bs^*\right] = \frac{P}{U}\bI_U$, and $P$ is the average total transmitted power. Since $\bF_\mathrm{RF}$ is implemented using quantized analog phase shifters, $\left[\bF_\mathrm{RF}\right]_{m,n}= \frac{1}{\sqrt{N_\mathrm{BS}}} e^{\j \phi_{m,n}}$, where $\phi_{m.n}$ is a quantized angle, and the factor of  $\frac{1}{\sqrt{N_\mathrm{BS}}}$ is for power normalization.

For simplicity, we adopt a narrowband block-fading channel model as in \cite{ayach2013spatially,brady2013beamspace,mmWave_Estimation_2013} in which the $u$th MS observes the received signal as
\begin{equation}
\br_{u}=\bH_{u}\sum_{n=1}^{U}{\bF_\mathrm{RF} \bff^\mathrm{BB}_{n} s_{n}} + \bn_{u},
\label{eq:received_signal}
\end{equation}
where $\bH_{u}$ is the $N_\mathrm{MS} \times N_\mathrm{BS}$ matrix that represents the mmWave channel between the BS and the $u$th MS, and $\bn_{u} \sim \cN (\boldsymbol{0}, \sigma^2 \bI )$ is a Gaussian noise vector.

At the $u$th MS, the RF combiner $\bw_u$ is used to process the received signal $\br_u$ to produce the scalar
\begin{equation}
y_u= \bw_u^*  \bH_{u}  \sum_{n=1}^{U}{\bF_\mathrm{RF} \bff^\mathrm{BB}_{n} s_{n}} + \bw_u^* \bn_u,
\label{eq:combined_signal}
\end{equation}
where $\bw_u$ has similar constraints as the RF precoders, i.e., the constant modulus and quantized angles constraints. We assume that only analog (RF) beamforming is used at the MS's as they will likely need cheaper hardware with lower power consumption.

\begin{figure} [t]
\centerline{
\includegraphics[width=1\columnwidth, height=130pt]{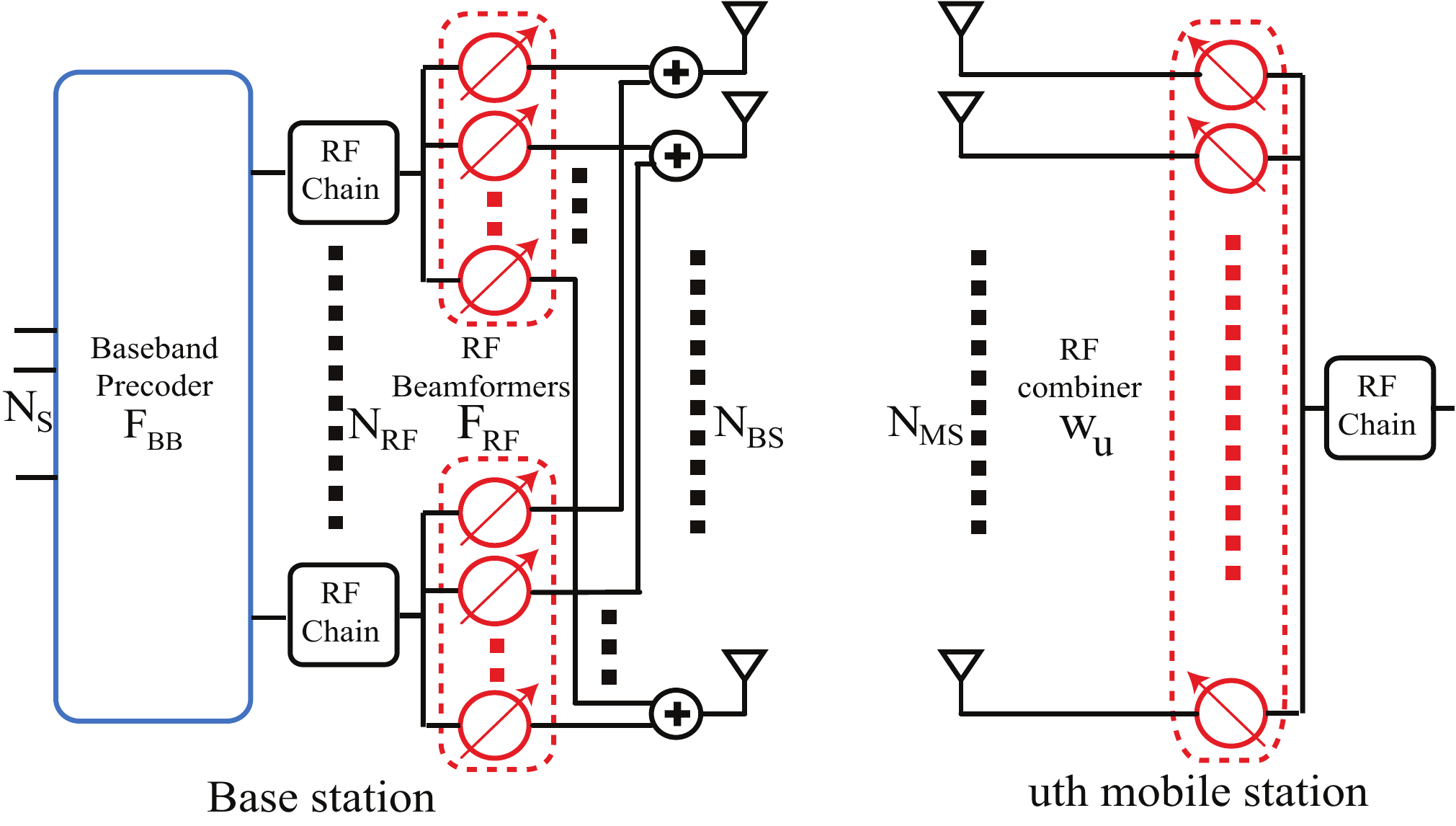}
}
\caption{A BS with hybrid analog/digital architecture communicating with the $u$th MS that employs analog-only combining.}
\label{fig:Hybrid}
\end{figure}

MmWave channels are expected to have limited scattering~\cite{Rapp5G}. To incorporate this effect, we assume a geometric channel model with $L_{u}$ scatterers for the channel of user $u$. Each scatterer is assumed to contribute a single propagation path between the BS and MS~\cite{ayach2013spatially}. The channel model can be transformed into the virtual channel model, which simplifies the generalization for larger angle spreads by incorporating spatial spreading functions \cite{brady2013beamspace}. Under this model, the channel $\bH_{u}$ can be expressed as
\begin{align}
\bH_u =\sqrt{\frac{N_\mathrm{BS} N_\mathrm{MS}} { L_{u}}} \sum_{\ell=1}^{L_{u}}\alpha_{u,\ell} \ba_\mathrm{MS}\left(\theta_{u,\ell}\right) \ba^*_\mathrm{BS} \left(\phi_{u,\ell} \right),
\label{eq:channel_model}
\end{align}
where $\alpha_{u,\ell}$ is the complex gain of the $\ell^\mathrm{th}$ path, including the path-loss, with $\mathbb{E}\left[|\alpha_{u,\ell}|^2\right]=\bar{\alpha}$. The variables $\theta_{u,\ell}$, and $\phi_{u,\ell} \in [0, 2\pi]$ are the $\ell^\mathrm{th}$ path's angles of arrival and departures (AoA/AoD) respectively. Finally, $\ba_\mathrm{BS}\left(\phi_{u,\ell}\right)$ and $\ba_\mathrm{MS}\left(\theta_{u,\ell}\right)$ are the antenna array response vectors of the BS and $u$th MS respectively. The BS and each MS are assumed to know the geometry of their antenna arrays. While the algorithms and results developed in the paper can be applied to arbitrary antenna arrays, we use uniform planar arrays (UPAs), with both azimuth and elevation components of each angle, in the simulations of \sref{sec:Results}.

\section{Problem Formulation} \label{sec:Form}
Given the received signal at the $u$th MS in \eqref{eq:received_signal}, which is then processed using the RF combiner $\bw_u$, the achievable rate of user $u$ is
\begin{equation}
R_u=\log_2\left(1+\frac{P \left|\bw_u^* \bH_u \bF_\mathrm{RF} \bff_u^\mathrm{BB}\right|^2}{P \sum_{n\neq u} { \left|\bw_u^* \bH_u \bF_\mathrm{RF} \bff_n^\mathrm{BB}\right|^2 + \sigma^2} }\right).
\end{equation}
The sum-rate of the system is then  $R_\mathrm{sum}=\sum_{u=1}^U R_u$.
Our main objective is to efficiently design the analog (RF) and digital (baseband) precoders at the BS and the analog combiners at the MS's to maximize the sum-rate of the system.

Due to the constraints on the RF hardware, such as the availability of only quantized angles for the RF phase shifters, the analog beamforming/combining vectors can take only certain values. Hence, these vectors need to be selected from finite-size codebooks. There are different models for the RF beamforming codebooks, two possible examples are
\begin{enumerate}
\item{\textbf{General quantized beamforming codebooks} Here, the codebooks are usually designed for rich channels and, therefore, attempt a uniform quantization of the space of beamforming vectors. These codebooks were commonly used in traditional MIMO systems.}
\item{\textbf{Beamsteering codebooks} The beamforming vectors, here, are spatial matched filters for the single-path channels. As a result, they have the same form of the array response vector and can be parameterized by a simple angle. Let $\cF$ represent the RF beamforming codebook with cardinality $\left|\cF\right|=N_\mathrm{Q}$. Then,  $\cF$ consists of the vectors $\ba_\mathrm{BS}\left(\frac{2 \pi k_\mathrm{Q}}{N_\mathrm{Q}}\right)$, for the variable $k_\mathrm{Q}$ taking the values $0, 1, 2$, and $N_\mathrm{Q}-1$. The RF combining vectors codebook $\cW$ can be similarly defined.}
\end{enumerate}

Motivated by the good performance of single-user hybrid precoding algorithms \cite{ayach2013spatially,mmWave_Estimation_2013} which rely on RF beamsteering vectors, and by the relatively small size of these codebooks which depend on a single-parameter quantization, we will adopt the beamsteering codebooks for the analog beamforming vectors. While the problem formulation and proposed algorithm in this paper are general for any codebook, the performance evaluation of the proposed algorithm done in Section \ref{sec:Performance} depends on the selected codebook.

If the system sum-rate is adopted as a performance metric, the precoding design problem is then to find $\bF_\mathrm{RF}^\star$, $\left\{\bff_u^{ \star \mathrm{BB}}\right\}_{u=1}^U$ and $\left\{\bw_u^\star\right\}_{u=1}^U$ that solve
\begin{align}
\begin{split}
\left\{\bF_\mathrm{RF}^\star, \left\{\bff_u^{ \star \mathrm{BB}}\right\}_{u=1}^U, \left\{\bw_u^\star\right\}_{u=1}^U\right\} & = \\
& \hspace{-120pt} \arg\max {\sum_{u=1}^{U} \log_2\left(1+\frac{P \left|\bw_u^* \bH_u \bF_\mathrm{RF} \bff_u^\mathrm{BB}\right|^2}{P \sum_{n\neq u} { \left|\bw_u^* \bH_u \bF_\mathrm{RF} \bff_n^\mathrm{BB}\right|^2 + \sigma^2} }\right)} \\
& \hspace{-117pt} \mathrm{s.t.} \hspace{23 pt}\left[\bF_\mathrm{RF}\right]_{:,u} \in \cF, u=1,2,..., U, \\
& \hspace{-79pt}  \bw_u \in \cW, u=1,2,..., U, \\
& \hspace{-79pt}  \|\bF_\mathrm{RF} \left[\bff_1^\mathrm{BB}, \bff_2^\mathrm{BB}, ..., \bff_U^\mathrm{BB} \right]\|_F^2=U.
\end{split} \label{eq:Opt}
\end{align}

The problem in \eqref{eq:Opt} is a mixed integer programming problem. Its solution requires a search over the entire $\cF^U \times \cW^U $ space of all possible $\bF_\mathrm{RF}$ and $\left\{\bw_u\right\}_{u=1}^U$ combinations. Further, the digital precoder $\bF_\mathrm{BB}$ needs to be jointly designed with the analog beamforming/combining vectors. In practice, this may require the feedback of the channel matrices $\bH_u, u=1, 2, ..., U$, or the effective channels, $\bw_u^* \bH_u \bF_\mathrm{RF}$. Therefore, the solution of \eqref{eq:Opt} requires large training and feedback overhead. Moreover, the optimal digital linear precoder is not known in general even without the RF constraints, and only iterative solutions exist \cite{Coordinated}. Hence, the direct solution of this sum-rate maximization problem is not practical.

Similar problems to \eqref{eq:Opt} have been studied before in literature, but with baseband (not hybrid) precoding and combining \cite{Coordinated, Boccardi_BD,Jindal_Comb}. The application of these algorithms in mmWave systems, however, is generally difficult due to (i) the large feedback overhead associated with the large antenna arrays at both the BS and MS's, (ii) the need for the RF beamforming/combining vectors to be taken from quantized codebooks which requires a search over both the BS and MS's beamforming vectors \cite{mmWave_Estimation_2013}, and limits the freedom in designing the combining vectors, and (iii) the convergence of iterative precoding/combining algorithms like \cite{Coordinated} has not yet been studied for hybrid precoders.

Given the practical difficulties associated with applying the   precoding/combining algorithms in mmWave systems, we propose a new mmWave-suitable multi-user MIMO beamforming algorithm in \sref{sec:MU_Alg}. Our proposed algorithm is developed to achieve a good performance compared with the solution of \eqref{eq:Opt}, while requiring low training and feedback overhead.

\section{Two-stage Multi-user Hybrid Precoding} \label{sec:MU_Alg}

\begin{algorithm} [!t]                     
\caption{Multi-user Hybrid Precoders Algorithm}          
\label{alg:MU_Precoding}                           
\begin{algorithmic} 
    \State \textbf{Input:} $\cF$ and $\cW$, BS and MS RF beamforming codebooks
    \State \textbf{First stage:} Single-user RF beamforming/combining design
    \State  For each MS $u, u=1, 2, ..., U$
    \State \hspace{20pt} The BS and MS $u$ select  $\bv_u^\star$ and $\bg_u^\star$ that solve
    \State \hspace{80pt} $\left\{\bg_u^\star, \bv_u^\star\right\}=\displaystyle{\operatorname*{\arg max}_{ \substack{\forall \bg_u \in \cW \\ \forall \bv_u \in \cF }}}{\|\bg_u^* \bH_u \bv_u\|} $
    \State \hspace{20pt} MS $u$ sets $\bw_u=\bg_u^\star$
    \State BS sets $\bF_\mathrm{RF}=[\bv_1^\star, \bv_2^\star, ..., \bv_U^\star]$
    \State \textbf{Second stage:} Multi-user digital precoding design
    \State For each MS $u, u=1, 2, ..., U$
    \State \hspace{20pt} MS $u$ estimates its effective channel $\overline{\bh}_u=\bw_u^* \bH_u \bF_\mathrm{RF}$
    \State \hspace{20pt} MS $u$ feeds back its effective channel $\overline{\bh}_u$ to the BS
    \State BS designs $\bF_\mathrm{BB}= {\overline{\bH}}^*\left(\overline{\bH} {\overline{\bH}}^*\right)^{-1}$, $\overline{\bH}=\left[\overline{\bh}_1^\mathrm{T},..., \overline{\bh}_U^\mathrm{T}\right]^\mathrm{T}$
    \State $\bff_u^\mathrm{BB}= \frac{\bff_u^\mathrm{BB}}{\left\|\bF_\mathrm{RF} \bff_u^\mathrm{BB}\right\|_F}, u=1, 2, ..., U$
    \end{algorithmic}
\end{algorithm}

The additional challenge in solving (7), beyond the usual coupling between precoders and combiners \cite{Coordinated, Boccardi_BD,Jindal_Comb}, is the splitting of the precoding operation into two different domains, each with different constraints. To overcome that, we propose Algorithm \ref{alg:MU_Precoding} that divides the calculation of the precoders into two stages. In the first stage, the BS RF precoder and the MS RF combiners are jointly designed to maximize the desired signal power of each user, neglecting the resulting interference among users. In the second stage, the BS digital precoder is designed to manage the multi-user interference. The operation of Algorithm \ref{alg:MU_Precoding} can be summarized as follows.

\textbf{In the first stage:} The BS and each MS $u$ design the RF beamforming and combining vectors, $\bff^\mathrm{RF}_u$ and $\bw_u$, to maximize the desired signal power for user $u$, and neglecting the other users' interference. As this is the typical single-user RF beamforming design problem, efficient beam training algorithms developed for single-user systems that do not require explicit channel estimation and have a \textit{low training overhead}, can be used to design the RF beamforming/combining vectors. For example, \cite{Wang1,mmWave_Estimation_2013} developed multi-resolution codebooks to adaptively refine the beamforming/combining vectors, and hence avoid the exhaustive search complexity.

\textbf{In the second stage:} The BS trains the effective channels, $\overline{\bh}_u=\bw_u^* \bH_u \bF_\mathrm{RF}, u=1, 2, ..., U$, with the MS's. Note that the dimension of each effective channel vector is $U \times 1$ which is \textit{much less} than the original channel matrix.  Then, each MS $u$ feeds its effective channel back to the BS, which designs its zero-forcing digital precoder based on these effective channels. Thanks to the narrow beamforming and the sparse mmWave channels, the effective channels are expected to be well-conditioned which makes adopting a multi-user digital beamforming strategy like zero-forcing capable of achieving near-optimal performance as will be shown in \sref{sec:Performance}.

\section{Achievable Rate with Single-Path Channels} \label{sec:Performance}
The analysis of hybrid precoding is non-trivial due to the coupling between analog and digital precoders. Therefore, we will study the performance of the proposed algorithm in the case of single-path channels. This  case is of special interest as mmWave channels are likely to be sparse, i.e., only a few paths exist \cite{Rapp5G}. Further, the analysis of this special case will give useful insights into the performance of the proposed algorithms in more general settings.

Next, we will  characterize a lower bound on the achievable rate by each MS when Algorithm \ref{alg:MU_Precoding} is used to design the hybrid precoders at the BS and RF combiners at the MS's. Consider the BS and MS's with the system and channels described in \sref{sec:Model} with the following assumptions:
\begin{assumption}
All channels are single-path, i.e., $L_u=1$, $u=1, 2, ..., U$. For ease of exposition, we will omit the subscript $\ell$ in the definition of the channel parameters in \eqref{eq:channel_model}.
\end{assumption}
\begin{assumption}
The RF beamforming and combining vectors $\bff_u^\mathrm{RF}$ and $\bw_u, u=1, 2, ..., U$, $u=1,2, ..., U$  are beamsteering vectors with continuous angles.
\end{assumption}
\begin{assumption}
The BS perfectly knows the effective channels $\overline{\bh}_u, u=1, 2, ..., U$.
\end{assumption}

In the first stage of Algorithm \ref{alg:MU_Precoding}, the BS and each MS $u$ find $\bv_u^\star$ and $\bg_u^\star$ that solve
\begin{equation}
\left\{\bg_u^\star, \bv_u^\star\right\}=\displaystyle{\operatorname*{\arg max}_{ \substack{\forall \bg_u \in \cW \\ \forall \bv_u \in \cF }}}{\|\bg_u^* \bH_u \bv_u\|}.
\label{eq:Stage1}
\end{equation}

As the channel $\bH_u$ has only one path, and given the continuous beamsteering capability assumption, the optimal RF precoding and combining vectors will be $\bg_u^\star=\ba_\mathrm{MS}(\theta_u)$, and $\bv_u^\star=\ba_\mathrm{BS}(\phi_u)$. Consequently, the MS sets $\bw_u=\ba_\mathrm{MS}(\theta_u)$ and the BS takes $\bff_u^\mathrm{RF}=\ba_\mathrm{BS}(\phi_u)$. If we let the $N_\mathrm{BS} \times U$  matrix $\bA_\mathrm{BS}$  gather the BS array response vectors associated with the $U$ AoDs, i.e.,  $\bA_\mathrm{BS}=\left[\ba_\mathrm{BS}\left(\phi_1\right), \ba_\mathrm{BS}\left(\phi_2\right), ..., \ba_\mathrm{BS}\left(\phi_U\right) \right]$, we can then write the BS RF beamforming matrix, including the beamforming vectors of the $U$ users as $\bF_\mathrm{RF}=\bA_\mathrm{BS}$.

The effective channel for user $u$ after designing the RF precoders and combiners is
\begin{equation}
\begin{split}
\overline{\bh}_u&=\bw_u \bH_u \bF_\mathrm{RF}\\
&=\sqrt{N_\mathrm{BS} N_\mathrm{MS}} \alpha_u \ba^*_\mathrm{BS}\left(\phi_u\right) \bF_\mathrm{RF}.
\end{split}
\end{equation}

Now, defining $\overline{\bH}=[\overline{\bh}_1^\mathrm{T}, \overline{\bh}_2^\mathrm{T}, ..., \overline{\bh}_U^\mathrm{T}]^\mathrm{T}$, and given the design of $\bF_\mathrm{RF}$, we can write the effective channel matrix $\overline{\bH}$ as
\begin{equation}
\overline{\bH}=\bD \bA_\mathrm{BS}^* \bA_\mathrm{BS}, \label{eq:eff_channel}
\end{equation}
where $\bD$ is a $U \times U$ diagonal matrix, $\left[\bD\right]_{u,u}=\sqrt{N_\mathrm{BS} N_\mathrm{MS}} \alpha_u$.

Based on this effective channel, the BS zero-forcing digital precoder is defined as
\begin{equation}
\bF_\mathrm{BB}=\overline{\bH}^* \left(\overline{\bH} \overline{\bH}^*\right)^{-1} \boldsymbol\Lambda,
\end{equation}
where $\boldsymbol\Lambda$ is a diagonal matrix with the diagonal elements adjusted to satisfy the precoding power constraints $\left\|\bF_\mathrm{RF} \bff_u^\mathrm{BB}\right\|^2=1, u=1, 2, ..., U$. The diagonal elements of $\boldsymbol\Lambda$ are then equal to [See Appendix A in \cite{alkhateeb2014limited} for a derivation]
\begin{equation}
{\boldsymbol\Lambda}_{u,u}=\sqrt{\frac{N_\mathrm{BS} N_\mathrm{MS} }{ \left(\bA_\mathrm{BS}^* \bA_\mathrm{BS}\right)^{-1}_{u,u}}}\left|\alpha_u\right|, u=1, 2, ..., U.
\end{equation}

Note that this $\boldsymbol\Lambda$ is different than the traditional digital zero-forcing precoder due to the different power constraints in the hybrid analog/digital architecture.

The achievable rate for user $u$ is then
\begin{align}
\begin{split}
R_u&=\log_2 \left(1+\frac{\mathsf{SNR}}{U} \left|\overline{\bh}_u^* \bff_u^\mathrm{BB}\right|^2\right),\\
&=\log_2\left(1+\frac{\mathsf{SNR}}{ U} \frac{N_\mathrm{BS} N_\mathrm{MS} \left|\alpha_u\right|^2 }{\left(\bA_\mathrm{BS}^* \bA_\mathrm{BS}\right)^{-1}_{u,u}}\right). \label{eq:Rate_u}
\end{split}
\end{align}

To bound this rate, the following lemma, which characterizes a useful property of the matrix $\bA_\mathrm{BS}^* \bA_\mathrm{BS}$, can be used.
\begin{lemma}
Let $\bA_\mathrm{BS}=\left[\ba_\mathrm{BS}\left(\phi_1\right), \ba_\mathrm{BS}\left(\phi_2\right), ..., \ba_\mathrm{BS}\left(\phi_U\right) \right]$, with the angles $\phi_u, u=1, 2, ..., U$ taking continuous values in $[0, 2 \pi]$, then the matrix $\bP=\bA_\mathrm{BS}^* \bA_\mathrm{BS}$ is positive definite almost surely.
\end{lemma}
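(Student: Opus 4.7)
The plan is to exploit two facts: first, that $\bP = \bA_\mathrm{BS}^* \bA_\mathrm{BS}$ is a Gram matrix, hence automatically Hermitian positive semidefinite; and second, that positive definiteness is equivalent to $\bA_\mathrm{BS}$ having full column rank, i.e., to $\det(\bP) \neq 0$. So the task reduces to showing that the event $\{\det(\bP) = 0\}$ has measure zero under a continuous distribution on the AoDs $\phi_1, \ldots, \phi_U \in [0, 2\pi]$.

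My approach is a standard analyticity argument. I would first observe that each entry of $\ba_\mathrm{BS}(\phi)$ is, by the geometry of the array, a complex exponential whose phase depends smoothly (in fact real-analytically) on $\phi$; consequently each entry of $\bP$ is a trigonometric polynomial in the $\phi_u$'s, and $\det(\bP)$ is a real-analytic function $f : [0, 2\pi]^U \to \bbR$ in the tuple $(\phi_1, \ldots, \phi_U)$. The key step is then to verify that $f$ is not identically zero. For this I would exhibit a single tuple of angles for which the columns of $\bA_\mathrm{BS}$ are linearly independent; for the uniform planar arrays used in Section~\ref{sec:Results} (and more generally for ULAs), this follows from the Vandermonde-type structure of the array response, which guarantees full column rank whenever $U$ distinct angles are chosen and $U \leq N_\mathrm{BS}$ (which is assumed throughout).

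Once a nonzero value of $f$ is exhibited, I would invoke the identity theorem for real-analytic functions: a real-analytic function on a connected open set that is not identically zero has zero set of Lebesgue measure zero. Since the $\phi_u$'s are drawn continuously from $[0, 2\pi]$, their joint distribution is absolutely continuous with respect to Lebesgue measure on $[0, 2\pi]^U$, so the event $\{f(\phi_1, \ldots, \phi_U) = 0\}$ has probability zero. Combined with the a priori positive semidefiniteness, this yields positive definiteness almost surely.

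The main obstacle I anticipate is the verification that $f \not\equiv 0$ for a truly \emph{arbitrary} array geometry. For pathological geometries (e.g., two colocated antennas) some array responses can be linearly dependent for every choice of angles; to rule this out one needs the mild structural condition that no $U$ of the antenna positions are configured so as to annihilate every tuple of responses, which holds for standard ULA/UPA geometries considered in this paper. Beyond this mild caveat, the Vandermonde/uniform-array argument supplies the required nonzero evaluation of $f$ and the rest of the proof is mechanical.
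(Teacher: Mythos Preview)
Your proposal is correct, and it takes a genuinely different route from the paper's own proof. The paper argues directly: it observes that $\bP$ is a Gram matrix (hence positive semidefinite), then claims that any two array-response vectors $\ba_\mathrm{BS}(\phi_u)$ and $\ba_\mathrm{BS}(\phi_n)$ are linearly dependent \emph{only if} $\phi_u=\phi_n$, an event of probability zero under a continuous distribution; from this it concludes that the full collection of columns is almost surely linearly independent. This is a short combinatorial/structural argument that, strictly speaking, relies on the Vandermonde-type structure of the array to pass from ``all angles distinct'' to ``columns linearly independent'' (pairwise independence alone does not imply joint independence). Your approach instead packages the problem as $f(\phi_1,\ldots,\phi_U)=\det(\bP)$ being a real-analytic function on $[0,2\pi]^U$, exhibits one tuple with $f\neq 0$ (again via the Vandermonde structure of ULAs/UPAs), and then invokes the zero-set theorem for real-analytic functions to conclude that $\{f=0\}$ has Lebesgue---hence probability---measure zero. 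What your approach buys is robustness and generality: you do not need to characterize \emph{exactly} when dependence occurs, only to produce a single nondegenerate configuration, and the argument then extends verbatim to any array geometry for which one such configuration exists. What the paper's approach buys is brevity and elementarity: no analyticity machinery, just the observation that distinct angles give distinct (and, for the arrays at hand, independent) steering vectors. Your caveat about pathological geometries is well placed and is precisely the content that the paper's proof leaves implicit.
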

\begin{proof}
Let the matrix $\bP = \bA_\mathrm{BS}^* \bA_\mathrm{BS}$, then for any non-zero complex vector $\bz \in \cC^{N_\mathrm{BS}}$, it follows that $\bz^* \bP \bz = \|\bA_\mathrm{BS} \bz\|_2^2 \geq 0$. Hence, the matrix $\bP$ is positive semi-definite. Further, if the vectors $\ba_\mathrm{BS}\left(\phi_1\right), \ba_\mathrm{BS}\left(\phi_2\right)$, $ ..., \ba_\mathrm{BS}\left(\phi_U\right)$ are linearly independent, then for any non-zero complex vector $\bz$, $\bA_\mathrm{BS} \bz \neq 0$, and the matrix $\bP$ is positive definite. To show that, consider any two vectors $\ba_\mathrm{BS}\left(\phi_u\right), \ba_\mathrm{BS}\left(\phi_n\right)$. These vectors are linearly dependent if and only if $\phi_u=\phi_n$. As the probability of this event equals zero when the AoDs $\phi_u$ and $\phi_n$ are selected independently from a continuous distribution, the matrix $\bP$ is positive definite with probability one.
\end{proof}

Now, using the Kantorovich inequality \cite{kantorovich1948functional}, we can bound the diagonal entries of the matrix $\left(\bA_\mathrm{BS}^* \bA_\mathrm{BS}\right)^{-1}$ using the following lemma from \cite{bai1996bounds}.
\begin{lemma}
For any $n \times n$ Hermitian and positive definite matrix $\bP$ with the ordered eigenvalues satisfying $0<\lambda_\mathrm{min} \leq \lambda_2 \leq ... \leq \lambda_\mathrm{max}$, the element $\left(\bP \right)^{-1}_{u,u}, u=1, 2, ..., n$ satisfies
\begin{equation}
\left(\bP \right)^{-1}_{u,u} \leq \frac{1}{4 [\bP]_{u,u}}\left(\frac{\lambda_\mathrm{max}\left(\bP\right)}{\lambda_\mathrm{min}\left(\bP\right)}+\frac{\lambda_\mathrm{min}\left(\bP\right)}{\lambda_\mathrm{max}\left(\bP\right)}+2\right).
\end{equation}
\label{lemma3}
\end{lemma}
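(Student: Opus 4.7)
The plan is to recognize this as a direct application of the classical Kantorovich inequality specialized to standard basis vectors. Recall that Kantorovich's inequality states that for any Hermitian positive definite matrix $\bP$ with smallest and largest eigenvalues $\lambda_\mathrm{min}$ and $\lambda_\mathrm{max}$, and for every nonzero vector $\bx$,
\begin{equation}
\frac{(\bx^* \bP \bx)(\bx^* \bP^{-1} \bx)}{(\bx^* \bx)^2} \leq \frac{(\lambda_\mathrm{max}+\lambda_\mathrm{min})^2}{4\,\lambda_\mathrm{min}\lambda_\mathrm{max}}.
\end{equation}
I would cite this as a known result from \cite{kantorovich1948functional,bai1996bounds} and take it as the starting point rather than re-deriving it (a derivation uses the spectral decomposition of $\bP$ and the convexity argument that $(\bx^*\bP\bx)(\bx^*\bP^{-1}\bx)$ is maximized over unit vectors by a combination supported only on the extreme eigenvectors, followed by optimizing a one-variable function).

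Next I would specialize the inequality to $\bx = \bee_u$, the $u$th canonical basis vector, so that $\bx^* \bx = 1$, $\bx^*\bP\bx = [\bP]_{u,u}$, and $\bx^*\bP^{-1}\bx = [\bP^{-1}]_{u,u}$. This immediately gives
\begin{equation}
[\bP]_{u,u}\,[\bP^{-1}]_{u,u} \leq \frac{(\lambda_\mathrm{max}+\lambda_\mathrm{min})^2}{4\,\lambda_\mathrm{min}\lambda_\mathrm{max}}.
\end{equation}
Since $\bP$ is positive definite by assumption, $[\bP]_{u,u} = \bee_u^* \bP \bee_u > 0$, so I can divide both sides by $[\bP]_{u,u}$ without reversing the inequality.

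Finally, I would expand the right-hand side algebraically,
\begin{equation}
\frac{(\lambda_\mathrm{max}+\lambda_\mathrm{min})^2}{4\,\lambda_\mathrm{min}\lambda_\mathrm{max}} = \frac{1}{4}\left(\frac{\lambda_\mathrm{max}}{\lambda_\mathrm{min}}+\frac{\lambda_\mathrm{min}}{\lambda_\mathrm{max}}+2\right),
\end{equation}
which delivers the stated bound after dividing by $[\bP]_{u,u}$. There is no real obstacle here: the entire argument is a one-line invocation of Kantorovich plus elementary algebra, and the only subtlety worth checking is that $[\bP]_{u,u}>0$ so the division is legitimate, which follows from positive definiteness. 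In the context of the paper, Lemma~1 ensures that $\bP = \bA_\mathrm{BS}^*\bA_\mathrm{BS}$ satisfies this hypothesis almost surely, so the lemma can be applied to the quantity $(\bA_\mathrm{BS}^*\bA_\mathrm{BS})^{-1}_{u,u}$ appearing in \eqref{eq:Rate_u}.
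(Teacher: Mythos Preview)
Your proposal is correct and matches the paper's treatment: the paper does not give its own proof of this lemma but simply cites it as a known result from \cite{bai1996bounds}, explicitly noting that it follows from the Kantorovich inequality \cite{kantorovich1948functional}. Your derivation---apply Kantorovich with $\bx=\bee_u$, divide by $[\bP]_{u,u}>0$, and expand $(\lambda_\mathrm{max}+\lambda_\mathrm{min})^2/(4\lambda_\mathrm{min}\lambda_\mathrm{max})$---is precisely the standard argument behind that citation, so there is nothing to add.
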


We also note that for the matrix $\bA_\mathrm{BS}^* \bA_\mathrm{BS}$, we have $\left(\bA_\mathrm{BS}^* \bA_\mathrm{BS}\right)_{u,u}=1$, $\lambda_\mathrm{min}\left(\bA_\mathrm{BS}^* \bA_\mathrm{BS}\right)=\sigma_\mathrm{min}^2\left(\bA_\mathrm{BS}\right)$, and $\lambda_\mathrm{max}\left(\bA_\mathrm{BS}^* \bA_\mathrm{BS}\right)=\sigma_\mathrm{max}^2\left(\bA_\mathrm{BS}\right)$, where $\sigma_{\mathrm{max}}(\bA_\mathrm{BS})$ and $\sigma_{\mathrm{min}}(\bA_\mathrm{BS})$ are the maximum and minimum singular values, respectively. Using this note and Lemma \ref{lemma3}, and defining
\begin{equation}
G\left(\left\{\phi_u\right\}_{u=1}^U\right)= 4 \left(\frac{\sigma_{\mathrm{max}}^2\left(\bA_\mathrm{BS}\right)}{\sigma_{\mathrm{min}}^2\left(\bA_\mathrm{BS}\right)}+\frac{\sigma_{\mathrm{min}}^2\left(\bA_\mathrm{BS}\right)}{\sigma_{\mathrm{max}}^2\left(\bA_\mathrm{BS}\right)}+2\right)^{-1},
\end{equation}
we can bound the achievable rate of user $u$ in \eqref{eq:Rate_u} as
\begin{equation}
R_u \geq \log_2\left(1+ \frac{\mathsf{SNR}}{U}{N_\mathrm{BS} N_\mathrm{MS} \left|\alpha_u\right|^2} G\left(\left\{\phi_u\right\}_{u=1}^U\right) \right), \label{eq:lower_perfect}
\end{equation}

\begin{figure} [t]
\centerline{
\includegraphics[width=1.1\columnwidth]{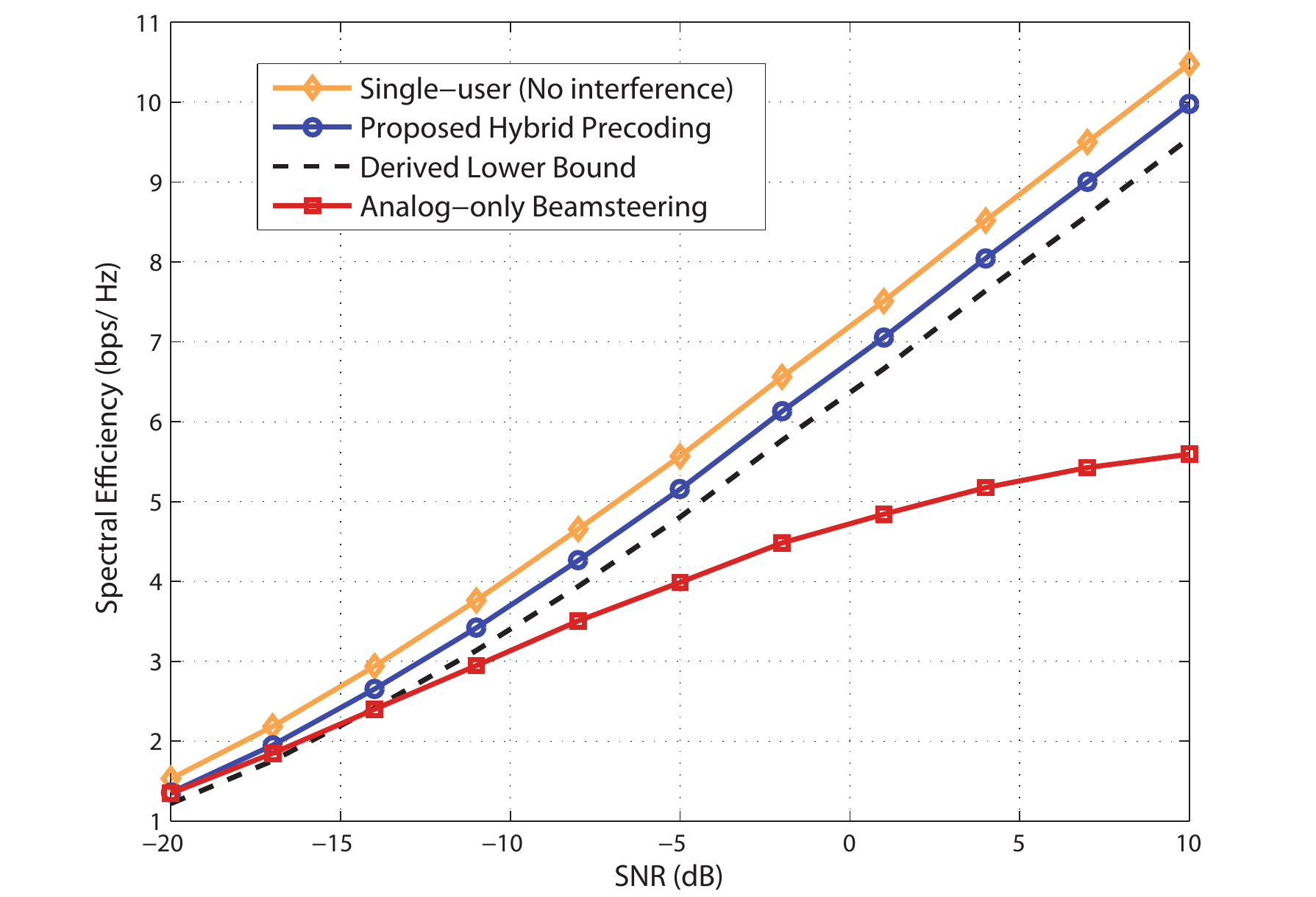}
}
\caption{Achievable rates per-user using the hybrid  precoding and beamsteering algorithms with perfect channel knowledge.}
\label{fig:Perfect_SNR}
\end{figure}

In addition to characterizing a lower bound on the rates achieved by the proposed hybrid analog/digital precoding algorithm, the bound in \eqref{eq:lower_perfect}  separates the dependence on the channel gains $\alpha_u$, and the AoDs $\phi_u, u=1, 2, ..., U$ which can be used to claim the optimality of the proposed algorithm in some cases and to give useful insights into the  gain of the proposed algorithm over analog-only beamsteering solutions. This is illustrated in the following results.

\begin{proposition}
Let $\mathring{R}_u = \log_2\left(1+\frac{\mathsf{SNR}}{ U}N_\mathrm{BS} N_\mathrm{MS} \left|\alpha_u\right|^2\right)$ denote the single-user rate, and let $R^\mathrm{BS}_u$ denote the rate achieved by user $u$ when the BS employs analog-only beamsteering designed based on the first stage of Algorithm \ref{alg:MU_Precoding}, i.e., $R^\mathrm{BS}_u = \log_2\left(1+\frac{\frac{SNR}{ U}  N_\mathrm{BS} N_\mathrm{MS} \left|\alpha_u\right|^2 }{\frac{SNR}{ U}  N_\mathrm{BS} N_\mathrm{MS} \left|\alpha_u\right|^2 \sum_{n \neq u} \left|\beta_{u,n}\right|^2+1}\right)$, with $\beta_{u,n}=\ba^*_\mathrm{BS}\left(\phi_u\right)\ba_\mathrm{BS}\left(\phi_n\right)$. When Algorithm \ref{alg:MU_Precoding} is used to design the hybrid precoders and RF combiners described in \sref{sec:Model}, and given the Assumptions 1-3, the achievable rate by any user $u$ satisfies
\begin{enumerate}
\item{$\bbE\left[ \mathring{R}_u- R_u\right]\leq K\left(N_\mathrm{BS},U\right)$},
\item{$\lim_{N_\mathrm{BS}\rightarrow\infty}{R_u}=\mathring{R}_u$ with probability one},
\item{$\lim_{N_\mathrm{MS}\rightarrow \infty}\bbE\left[R_u-R^\mathrm{BS}_u\right]= \infty$},
\end{enumerate}
where $K\left(N_\mathrm{BS},U\right)$ is a constant whose value depends only on $N_\mathrm{BS}$ and $U$.
\label{Prop1}
\end{proposition}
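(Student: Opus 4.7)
My plan is to reduce all three parts to the rate lower bound in \eqref{eq:lower_perfect} together with the trivial upper bound $R_u \le \mathring{R}_u$ (the single-user, interference-free rate is the best user $u$ could ever achieve in this hybrid ZF architecture with unit effective beam power), and then to exploit the fact that the randomness splits into two independent blocks: the effective single-user SNR $x_u \bydef \frac{\mathsf{SNR}}{U} N_\mathrm{BS} N_\mathrm{MS} |\alpha_u|^2$, which depends only on $|\alpha_u|$, and the angular penalty $G \bydef G(\{\phi_u\}_{u=1}^U) \in (0,1]$, which depends only on the AoDs. Keeping these two blocks separate is what lets $N_\mathrm{BS}$, $N_\mathrm{MS}$, and the channel gains act on different factors.

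For part (1), I would first note that $\kappa + \kappa^{-1} \ge 2$ for $\kappa \bydef \sigma_\mathrm{max}^2(\bA_\mathrm{BS})/\sigma_\mathrm{min}^2(\bA_\mathrm{BS}) \ge 1$, so $G \le 1$. The map $x \mapsto (1+x)/(1+xG)$ is non-decreasing with supremum $1/G$, so \eqref{eq:lower_perfect} gives the pointwise bound
\[
\mathring{R}_u - R_u \;\le\; \log_2\!\frac{1+x_u}{1+x_u G} \;\le\; -\log_2 G.
\]
The right-hand side is independent of $|\alpha_u|$, so taking expectations yields $\bbE[\mathring{R}_u - R_u] \le \bbE[-\log_2 G]$, a quantity determined entirely by $N_\mathrm{BS}$ and $U$ through the array response and the distribution of the AoDs. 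The main obstacle is certifying finiteness: writing $-\log_2 G = 2\log_2(\kappa+1) - \log_2 \kappa - 2$, this reduces to integrability of $-\log_2 \sigma_\mathrm{min}^2(\bA_\mathrm{BS})$ near the measure-zero collision set identified in Lemma 2. For a ULA this follows from a Vandermonde-type lower bound $\sigma_\mathrm{min}^2(\bA_\mathrm{BS}) \gtrsim \prod_{i<j} |e^{\j \phi_i} - e^{\j \phi_j}|^2$, whose logarithm is integrable over the torus; for arbitrary arrays a Gram-determinant argument plus compactness of the complement of the collision set suffices.

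For part (2), I would argue pathwise: by Lemma 2 the AoDs are almost surely distinct, and for any fixed pair $\phi_i \ne \phi_j$ the normalized inner product $\ba_\mathrm{BS}^*(\phi_i) \ba_\mathrm{BS}(\phi_j) \to 0$ as $N_\mathrm{BS} \to \infty$ (the standard decorrelation of steering vectors; $O(1/N_\mathrm{BS})$ for ULAs). Hence $\bA_\mathrm{BS}^* \bA_\mathrm{BS} \to \bI_U$ almost surely, so $\kappa \to 1$ and $G \to 1$. Then \eqref{eq:lower_perfect} yields $R_u \ge \log_2(1 + x_u G) \to \log_2(1 + x_u) = \mathring{R}_u$, and combined with $R_u \le \mathring{R}_u$ we conclude $R_u \to \mathring{R}_u$ almost surely.

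For part (3), I would exploit the asymmetry between the hybrid scheme (which cancels interference by ZF and is noise-limited) and the analog-only baseline (which is interference-limited). On one hand, \eqref{eq:lower_perfect} gives
\[
R_u \;\ge\; \log_2\!\left(1 + \tfrac{\mathsf{SNR}}{U} N_\mathrm{BS} N_\mathrm{MS} |\alpha_u|^2 \, G\right) \;=\; \Theta(\log_2 N_\mathrm{MS})
\]
almost surely, since $G > 0$ almost surely. On the other hand, dropping the noise term in the denominator of the defining expression for $R^\mathrm{BS}_u$ gives
\[
R^\mathrm{BS}_u \;\le\; \log_2\!\left(1 + \frac{1}{\sum_{n\ne u} |\beta_{u,n}|^2}\right),
\]
which is almost surely finite and independent of $N_\mathrm{MS}$. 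Therefore $R_u - R^\mathrm{BS}_u \to \infty$ almost surely; since this difference is eventually nonnegative and lower bounded below by an integrable random variable, Fatou's lemma (applied to the difference minus a fixed a.s.\ lower bound on $R^\mathrm{BS}_u$) transfers the divergence to the expectation, completing the proof.
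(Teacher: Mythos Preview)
Your approach is correct and matches the machinery the paper sets up: the conference version defers the actual proof to Appendices~B--C of \cite{alkhateeb2014limited}, but the derivation leading to \eqref{eq:lower_perfect} is clearly engineered precisely so that the gap $\mathring{R}_u - R_u$ is controlled by $-\log_2 G(\{\phi_u\})$, which is exactly the decomposition you exploit. Your use of $[\bP^{-1}]_{u,u}\,[\bP]_{u,u}\ge 1$ for the upper bound $R_u \le \mathring{R}_u$, the asymptotic orthogonality $\bA_\mathrm{BS}^*\bA_\mathrm{BS}\to \bI_U$ for part~(2), and the interference-limited ceiling on $R^\mathrm{BS}_u$ for part~(3) are all the natural moves.

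The only places that remain sketches rather than proofs are the two integrability claims: (i) $\bbE[-\log_2 G]<\infty$ in part~(1), and (ii) $\bbE\big[\log_2(1+1/\sum_{n\neq u}|\beta_{u,n}|^2)\big]<\infty$ in part~(3), which is what your Fatou step needs to subtract a finite quantity. Both reduce to showing that the relevant Gram quantities vanish only on analytic subvarieties of the AoD torus and that the logarithmic singularity there is integrable; your Vandermonde remark handles (i) for a ULA, and the Dirichlet-kernel zeros of $\beta_{u,n}$ give (ii) by the same mechanism. For general arrays you would want to invoke that $\det(\bA_\mathrm{BS}^*\bA_\mathrm{BS})$ and $\sum_{n\neq u}|\beta_{u,n}|^2$ are nonconstant real-analytic functions of the angles, so their zero sets have measure zero and $\log|\cdot|$ is locally integrable near them. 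These are routine but should be stated explicitly if you want the argument to stand on its own.
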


\begin{proof}
Please refer to Appendices  B and C in \cite{alkhateeb2014limited}.
\end{proof}

Proposition \ref{Prop1} indicates that the average achievable rate of any user $u$ using the proposed low-complexity precoding/combining algorithm grows with the same slope of that of the single-user rate at high SNR, and stays within a constant gap from it. This gap, $K\left(N_\mathrm{BS}, U\right)$, depends only on the number of users and the number of BS antennas. As the number of BS antennas increases, the gap between the achievable rate using Algorithm \ref{alg:MU_Precoding} and the single-user rate decreases, and approaches zero at infinite antenna numbers. One important note is that this gap does not depend on the number of MS antennas, which is contrary to the analog-only beamsteering, given by the first stage only of Algorithm \ref{alg:MU_Precoding}. This leads to the third part of the proposition. This implies that multi-user interference management is still important at mmWave systems even with large numbers of antennas at the BS and MS's, and perfect alignment. Note also that this is not the case when the number of BS antennas goes to infinity as it can be easily shown that the performance of RF beamsteering alone becomes optimal in this case.

\section{Simulation Results}\label{sec:Results}

\begin{figure} [t]
\centerline{
\includegraphics[width=1.1\columnwidth]{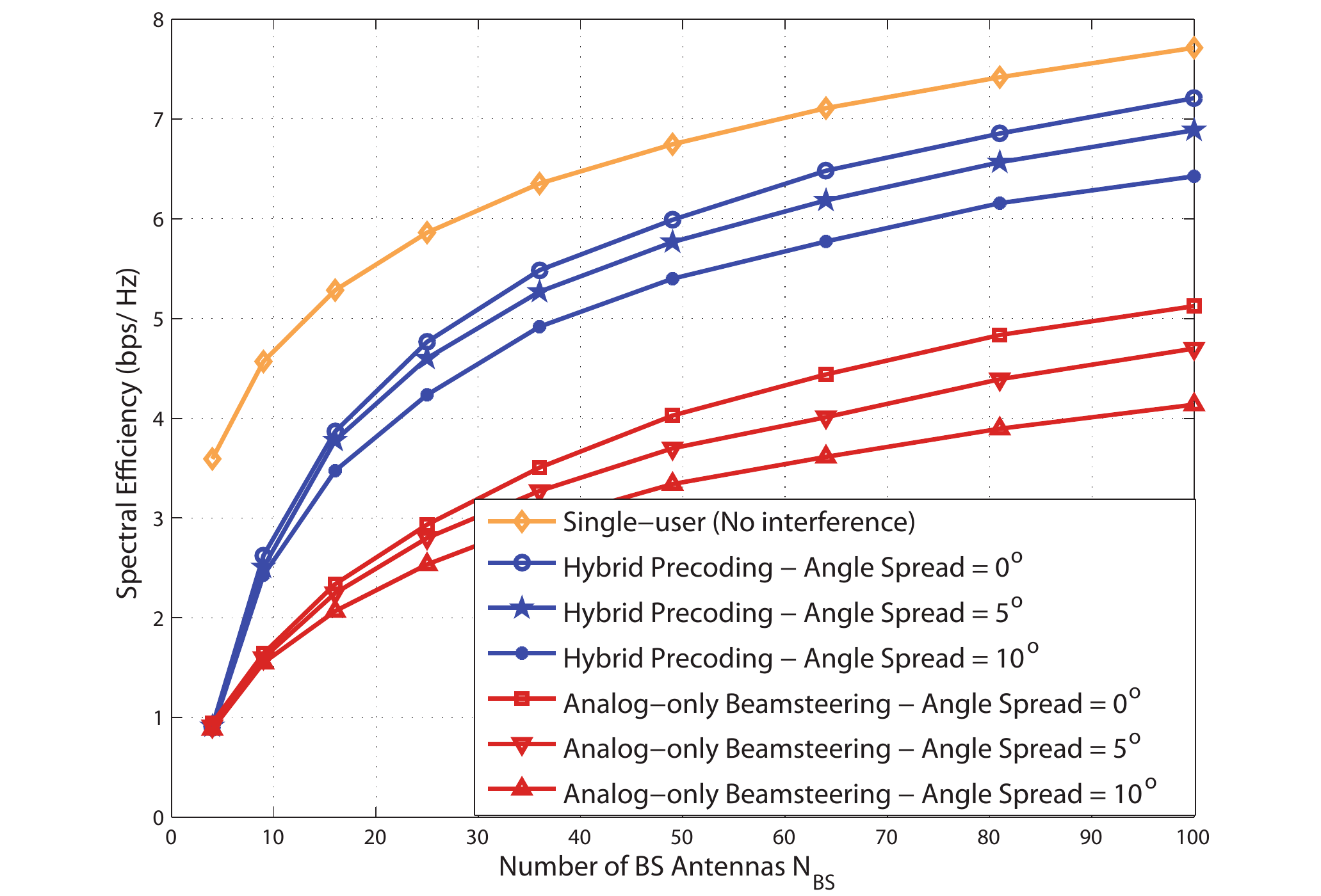}
}
\caption{Achievable rates per-user using the hybrid precoding and beamsteering algorithms for different values of angle spread.}
\label{fig:AS}
\end{figure}

In this section, we evaluate the performance of the proposed hybrid analog/digital precoding algorithm and the presented bounds using numerical simulations.

First, we compare the achievable rates without quantization loss in \figref{fig:Perfect_SNR}, where we consider the system model in \sref{sec:Model} with a BS employing an $8 \times 8$ UPA with $4$ MS's, each having a $4 \times 4$ UPA. The channels are single-path and Rayleigh distributed, the azimuth and elevation AoAs/AoDs are assumed to be uniformly distributed in $[0, 2 \pi ]$ and $[- \frac{\pi}{2}, \frac{\pi}{2}]$, respectively. The rate achieved by the proposed algorithm is compared with single-user and beamsteering rates. The figure shows that the performance of hybrid precoding is very close to the single-user rate thanks to canceling the residual multi-user interference. The gain over beamsteering increases with SNR as the beamsteering rate starts to be interference limited. The tightness of the derived lower bound is also shown.

To evaluate the performance of the proposed hybrid precoding algorithm in more general channel settings and system imperfections, we consider a 3-cluster channel model in \figref{fig:AS}. Each cluster has 6 rays with Laplacian distributed AoAs/AoDs \cite{ayach2013spatially}. The phase shifters at the BS and MS's are assumed to be quantized with 6 bits and 4 bits, respectively. The figure shows that when angle-spread and system imperfections exist, the performance of hybrid precoding is still within a small gap from the single-user rate. Further, it provides a good gain over analog-only solutions.

Finally, \figref{fig:Coverage} evaluates the performance of the proposed algorithm in a mmWave cellular setup including inter-cell interference, which is not explicitly incorporated into our designs. In this setup, BS's and MS's are assumed to be spatially distributed according to a Poisson point process with MS's densities 30 times the BS densities. The channels between the BS's and MS's are single-path and each link is determined to be line-of-sight or non-line-of-sight based on the blockage model in \cite{Cov_Magazine}. Each MS is associated to the BS with less path-loss and the BS randomly selects $ n=2,..,5$ users of those associated to it to be simultaneously served. BS's are assumed to have $8 \times 8$ UPAs and MS's are equipped with $4 \times 4$ UPAs. All UPA's are vertical, elevation angles are assumed to be fixed at $\pi/2$, and azimuth angles are uniformly distributed in $[0, 2 \pi]$. \figref{fig:Coverage} shows the per-user coverage probability defined as $\mathcal{P}\left(\mathrm{R_u \geq \eta}\right)$, where $\eta$ is an arbitrary threshold. This figure illustrates that hybrid precoding has a reasonable coverage gain over analog-only beamsteering, especially when large numbers of users are simultaneously served, thanks to the interference management capability of hybrid precoding.
\section{Acknowledgement}\label{sec:Ack}
This work is supported in part by the National Science Foundation under Grant No. 1218338 and 1319556, and by a gift from Huawei Technologies, Inc.

\section{Conclusions}\label{sec:conclusion}
In this paper, we proposed a low-complexity hybrid analog/digital precoding algorithm for downlink multi-user mmWave systems inspired by the sparse nature of the channel and the large number of deployed antennas. For the single-path channels, we characterized the achievable rates of the proposed algorithm and illustrated its asymptotic optimality. Compared with analog-only beamforming solutions, we showed that higher sum-rates can be achieved using hybrid precoding. The results indicate that interference management in multi-user mmWave systems is still important even when the number of antennas is very large. Simulation results verify these conclusions and show that they can be extended to more general multi-path mmWave channel settings.
\begin{figure}[t]
\centerline{
\includegraphics[width=1.1\columnwidth]{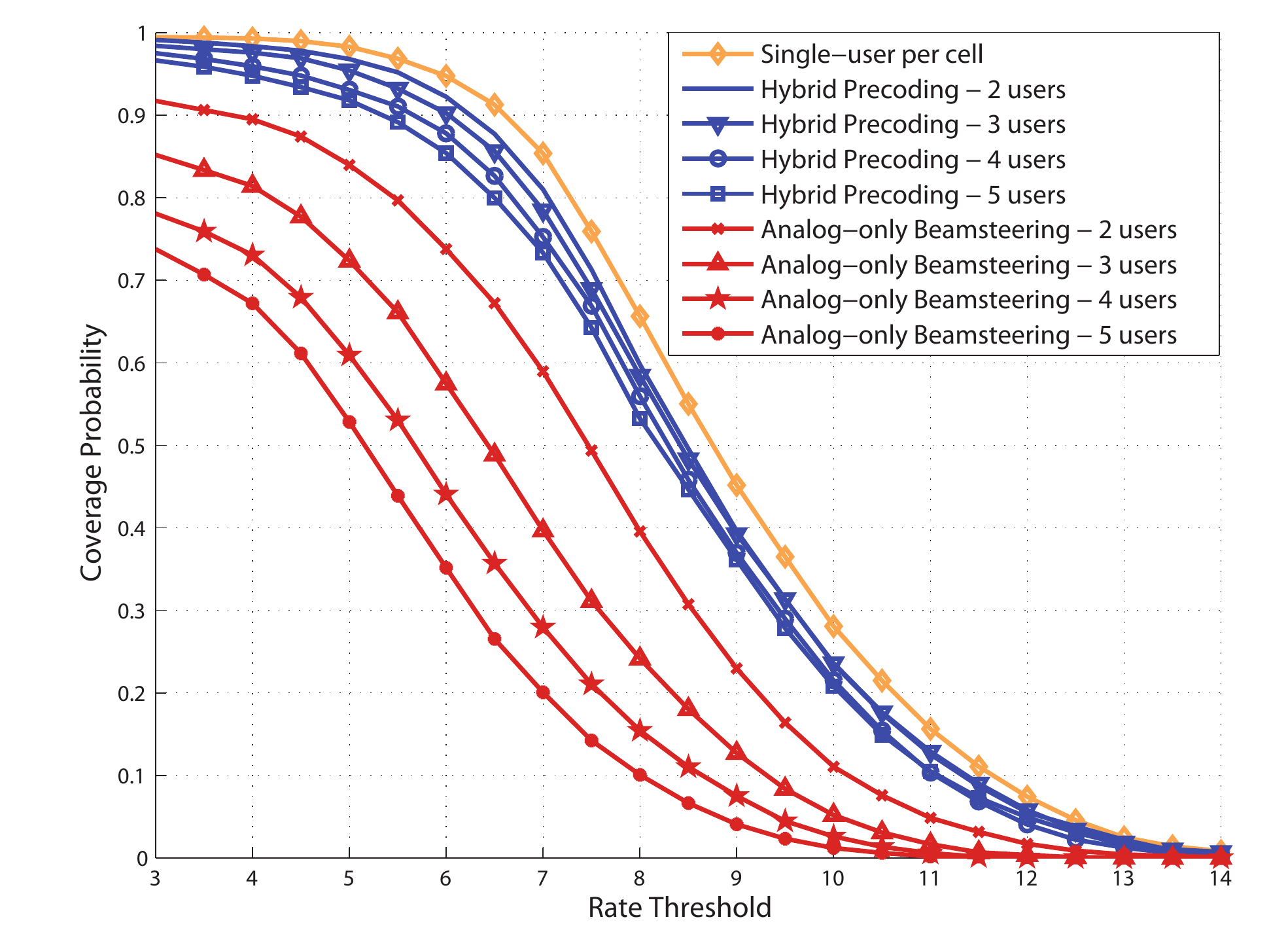}
}
\caption{Coverage probability of  hybrid precoding compared with single-user per cell and analog-only beamsteering solutions. The figure shows the per-user performance with different numbers of users per cell.}
\label{fig:Coverage}
\end{figure}


\begin{small}

\end{small}

\end{document}